\theoremstyle{plain}
   \newtheorem{theorem}{Theorem}[section]
   \newtheorem{proposition}[theorem]{Proposition}
\theoremstyle{definition}
   \newtheorem{remark}[theorem]{Remark}
\begin{document}
\title{Computational geometry and the U.S. Supreme Court}
\author{Noah Giansiracusa$^{*}$} 
\address{$^*$Assistant Professor of Mathematics, Swarthmore College, and corresponding author: ngiansi1@swarthmore.edu} 
\author{Cameron Ricciardi$^{**}$} 
\address{$^{**}$Undergraduate Mathematics and Economics Major, Swarthmore College}

\maketitle

\begin{abstract}
We use the United States Supreme Court as an illuminative context in which to discuss three different spatial voting preference models: an instance of the widely used single-peaked preferences, and two models that are more novel in which vote outcomes have a strength in addition to a location.  We introduce each model from a formal axiomatic perspective,  briefly discuss practical motivation for each in terms of judicial behavior, prove mathematical relationships among the voting coalitions compatible with each model, and then study the two-dimensional setting by presenting computational tools for working with the models and by exploring these with judicial voting data from the Supreme Court.
\end{abstract} 
\vspace{0.1in}
Keywords: Supreme Court; voter preference; coalition; ideal points; $k$-sets; Voronoi


\section{Introduction}

A popular view among legal scholars, journalists, and amateur court-watchers is that the nine justices sitting on the bench of the U.\ S.\ Supreme Court are driven largely by political ideology \cite{SegalSpaeth}.  To quantify this perspective, one considers each justice as having an ``ideal point'' on a one-dimensional axis\footnote{There have been several sophisticated statistical methods proposed to estimate the location of the ideals points, using data such as judicial voting records, op-ed authorship, party of presidential appointment, etc. \cite{MDS-vs-factor,JudicialCommonSpace,MartinQuinn,Peress-IdealPoint,SegalCover}.} and uses this layout in voting models to help understand the ways the justices vote, and in particular the ways the justices group themselves into a majority coalition and a minority coalition.  While the majority/minority dichotomy masks the complexity of the concurring and dissenting opinions underlying each decision, it is a simplification that many scholars---ourselves included---are willing to make, so we shall assume that each case has two outcomes, affirm or reverse, and that each justice votes for exactly one of these.

Voting models often assume ``single peaked'' preferences, which here means the two potential outcomes of a case are placed along the same one-dimensional political axis as the justices' ideal points, and if a justice is to the left of both outcomes or to the right of both then the justice votes for the closer of the two outcomes.  If a justice lies between the two outcomes then the vote depends on the precise shape of the justice's preference function; since this is difficult to estimate, a natural simplifying assumption is that all the justices use negated distance for their preference function, which means they just vote for the closest outcome to their ideal point on the political axis.  The midpoint between the two case outcomes then serves as a dividing point so that all justices to the left of it vote one way and all justices to the right vote the other way.  In particular, each majority coalition provided by this model is separated from the minority coalition by a point on the political axis.  But how does one then make sense of scrambled situations like the 5-to-4 vote in \emph{Ginzburg v.\ United States} (1965), where the minority coalition consisted of far left Douglas, far right Harlan, the right-leaning Stewart, and the centrist Black? (See Figure \ref{fig:1mds}.)

\begin{figure}
\begin{center}
\includegraphics[trim={0.8in 0.6in 0.4in 0.8in},clip,scale=1.15]{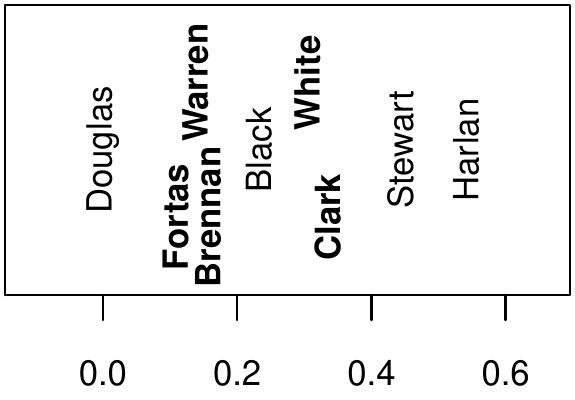}
\end{center}
\caption{A one-dimensional layout of judicial ideal points for the 1965--1966 natural court, using multidimensional scaling based on the voting similarity matrix (the details of these methods are discussed later in the paper).  The names for the majority in \emph{Ginzburg v.\ United States} (1965) are in bold.}
\label{fig:1mds}
\end{figure}

The dimensionality of the Supreme Court has been an active topic of exploration among scholars \cite{MDS-vs-factor,PNAS-2003-Sirovich,DimSupCourtEdelman,StatMechSC}, and indeed some of the majority/minority divisions that seem illogical from the political spectrum perspective naturally reveal themselves on a second dimension \cite{2ndDimension}.  However, a two-dimensional binary outcome voting model based on Euclidean distance preference functions still only allows for votes in which the majority is separated from the minority by a line in the plane, and while many cases do nicely exhibit this structure, many yet do not (including, for instance, the scrambled 5-4 vote mentioned above).  We introduce here two voting models, one which allows more geometrically flexible majority/minority divisions and one which allows the same divisions as the Euclidean distance model but arrives at these divisions in a different manner.  These models might be novel, but both authors of this paper come from a mathematical---specifically, geometric---background, so we are less familiar with the voting theory literature; our emphasis then is not on originality, per se, of these two voting models but on comparing the coalitions provided by all three models in two dimensions and presenting methods from computational geometry that allow a practitioner to work with and explore them.  

In \S\ref{sec:models} we introduce the three models from a formal mathematical perspective, provide a geometric interpretation, and briefly discuss some motivation for the models in the context of Supreme Court judicial voting behavior.  We then prove a theorem relating the coalitions---which is to say, the majority/minority divisions---allowed by each model; this is done for a Euclidean space $\mathbb{R}^d$ of arbitrary dimension $d\ge 1$.  The section concludes with an illustration of the models in two dimensions on a few case votes from the Supreme Court.  In \S\ref{sec:compgeom} we present computational methods for finding all the coalitions allowed by each model for a fixed configuration of voter ideal points, and for one of the models we also show how to estimate the locations of the two case outcomes based on the votes; this too is illustrated with Supreme Court cases.  We also mention connections to some topics in discrete/computational geometry, such as $k$-sets \cite{k-set} and higher order Voronoi diagrams \cite{HigherVoronoi}.  We conclude in \S\ref{sec:conclusion} with a few closing remarks.


\subsection*{Acknowledgements}

We thank Paul Edelman for helpful conversations on Supreme Court data and the literature analyzing it, Michael Burr for helpful conversations on computational geometry algorithms, and John Roemer for invaluable feedback and suggestions based on an earlier draft of this paper.  The first author was supported with funds from the NSF grant DMS-1802263.


\section{Three spatial voting models}\label{sec:models}

For each of the three voting models under consideration we: provide a formal mathematical description and geometric interpretation; briefly discuss a rationale for the model in terms of judicial behavior; and study the majority/minority coalitions allowed by the model in all dimensions, with illustrations in dimensions one and two coming from Supreme Court cases.

\subsection{Formal definitions and judicial motivation}

\subsubsection{General setup}
In all three of the settings discussed here we suppose each of the $N$ voters has an \emph{ideal point} in Euclidean space, let us denote these \[\mathbf{v}^{(1)}=(v_1^{(1)},\ldots,v_d^{(1)}),~\ldots,~\mathbf{v}^{(N)}=(v_1^{(N)},\ldots,v_d^{(N)})\in \mathbb{R}^d.\]  The component $v_j^{(i)}$ represents the placement of the $i$-th voter along the $j$-th axis; typically the axes represent some quality influencing the voter's outlook, such as political ideology, but such an interpretation is immaterial for the mathematical treatment itself.  

We only consider binary votes, meaning the space of possible outcomes for each vote has only two items (\emph{affirm} and \emph{reverse}, in the case of the Court).  As is common in voting theory, for each vote (or \emph{case} in our judicial setting) in question, each voter (or \emph{justice}) has an associated preference function---let us write this as \[f_i : \{\mathbf{p},\mathbf{q}\} \rightarrow \mathbb{R}\] for the $i$-th voter---and the voter always votes for the outcome with a greater preference value (we do not need to identify the set $\{\mathbf{p},\mathbf{q}\}$ with affirm and reject in any particular way as we shall treat both possible outcomes equally from a mathematical perspective).  Thus a voting model, in this context, is a way of specifying the preferences functions of all $N$ voters.  

In all three of the models we consider, the outcome space is embedded in the same Euclidean space as the voters,
\[\mathbf{p}=(p_1,\ldots,p_d),~\mathbf{q}=(q_1,\ldots,q_d) \in \mathbb{R}^d,\]
and the preference functions are derived in some way from the spatial relations among the ideal points and the outcome points; we shall refer to this as a \emph{spatial} voting model.  In the judicial context, when $d=1$ we shall view smaller values in $\mathbb{R}$ as more politically liberal and larger values as more politically conservative, so each justice has their own hypothesized location on this political axis and each of the two possible outcomes of each case before the Court also has a hypothesized location on this axis.  We next describe the preference functions for our three different spatial models.

\subsubsection{Euclidean Distance Preference (EDP)}

This is a special case of single-peaked voting preferences where we simply use the Euclidean distance formula to define the preference functions:
\[f_i(\mathbf{p}) = -\sqrt{(v^{(i)}_1 - p_1)^2 + \cdots + (v^{(i)}_d - p_d)^2},~f_i(\mathbf{q}) = -\sqrt{(v^{(i)}_1 - q_1)^2 + \cdots + (v^{(i)}_d - q_d)^2}.\]  The motivation for this model is rather straightforward: the closer a case outcome is to a justice's personal outlook, the more inclined the justice is to vote for it---this suggests the use of a metric on $\mathbb{R}^d$, and the Euclidean metric is as natural as any.

\subsubsection{Sphere of Influence (SOI)}

Here we suppose the case under consideration associates not just a Euclidean location to the possible outcomes, but also a \emph{strength}, which we shall denote by \[s : \{\mathbf{p},\mathbf{q}\} \rightarrow \mathbb{R}_{\ge 0}.\]  We then have the following preference functions:
\[f_i(\mathbf{p}) = 
\begin{cases} 
      1\text{ if }\sqrt{(v^{(i)}_1 - p_1)^2 + \cdots + (v^{(i)}_d - p_d)^2}  \le s(\mathbf{p})\\
      0\text{ otherwise},
\end{cases}
\]
\[f_i(\mathbf{q}) = 
\begin{cases} 
      1\text{ if }\sqrt{(v^{(i)}_1 - q_1)^2 + \cdots + (v^{(i)}_d - q_d)^2}  \le s(\mathbf{q})\\
      0\text{ otherwise}.
\end{cases}
\]
Thus the $i$-th voter likes an outcome if the voter's ideal point lies within a sphere of radius equal to the strength of the outcome (which we shall term an \emph{outcome sphere}) and dislikes the outcome otherwise.  This can frequently result in ties---both outcomes having equal preference for a voter---so we will only consider this model under the assumption that each voter lies within \emph{exactly} one of the two outcome spheres.  Note, however, that it does not matter whether the two outcome spheres overlap, as long as there are no voter ideal points within the intersection.

The legal interpretation of this model is that a justice may have an innate inclination toward a particular outcome (based on political ideology, for instance) but the willingness of the justice to vote for that outcome also depends on the strength of the legal argument put before the court.  If both sides have equally strong arguments then the justice votes for the nearest outcome just as in the EDP model---but the justice is more willing to switch to a politically further away outcome if that side of the case is argued more strongly.  One might consider a more general setup in which the strengths of the two sides vary among the voters (as different justices will certainly measure the strength of an argument differently) but we shall not do this in the present paper; the model already provides plenty to explore with a single strength function for all voters.

\subsubsection{Take It or Leave It (TIOLI)}

Here we take an unorthodox approach by breaking symmetry and have the preference functions determined by properties of just one of the two possible outcomes, without loss of generality $\mathbf{p}$---and like the SOI model, we assign a strength $s(\mathbf{p})$ to this outcome in addition to its Euclidean location $\mathbf{p}\subseteq \mathbb{R}^d$.  Specifically, we use the following preference functions:
\[f_i(\mathbf{p}) = 
\begin{cases} 
      1\text{ if }\sqrt{(v^{(i)}_1 - p_1)^2 + \cdots + (v^{(i)}_d - p_d)^2}  \le s(\mathbf{p})\\
      0\text{ otherwise},
\end{cases}
\]
\[f_i(\mathbf{q}) = 
\begin{cases} 
      1\text{ if }\sqrt{(v^{(i)}_1 - p_1)^2 + \cdots + (v^{(i)}_d - p_d)^2}  > s(\mathbf{p})\\
      0\text{ otherwise}.
\end{cases}
\]
Thus a voter selects outcome $\mathbf{p}$ if the voter's ideal point lies in a sphere of radius $s(\mathbf{p})$, otherwise the voter selects outcome $\mathbf{q}$.  This can be viewed in rough analogy to the concept in statistics of rejecting the null hypothesis: the voter can be thought of as defaulting to outcome $\mathbf{q}$ but is willing to switch to $\mathbf{p}$ if sufficiently compelled by it.  In the judicial context, being compelled to switch to outcome $\mathbf{p}$ means having an ideology in reasonably close proximity to this outcome---and exactly how close this proximity needs to be is measured by the strength of the argument for $\mathbf{p}$.  For instance, a justice may generally opt to affirm cases brought to the court but is willing to reverse them if the political orientation of doing so is sufficiently compatible with the justice's innate outlook and the legal argument for reversing is also sufficiently convincing.

\subsection{Coalitions in one dimension}

Since we are focusing on votes with two outcomes, each vote divides the voters into two disjoint subsets: those who vote for $\mathbf{p}$, and those who vote for $\mathbf{q}$.  Let us refer to these two subsets as \emph{coalitions}.  For the Supreme Court, this means there is a \emph{majority coalition} and a \emph{minority coalition}, since the fully staffed Court has an odd number of justices so one of the two outcomes always receives strictly more votes than the other.  In the remainder of this section we explore the possible coalitions allowed by our three voting models; here by an \emph{allowed coalition} we mean that for a fixed configuration of voter ideal points, there exists locations (and strength(s) for the SOI and TIOLI models) of the vote outcomes such that the vote specified by the model in question yields the given coalition and its complementary coalition (the set of allowed coalitions is closed under complement due to our binary outcome assumption).  

The set of allowed coalitions for each model depends heavily on the configuration of voter ideal points, and the complexity of the configurations that can occur rapidly increases with the dimension $d$ of the ambient Euclidean space.  We start here with dimension one, where it is quite straightforward to characterize everything completely.  In this setting the voter ideal points $v^{(i)}$ and the outcome locations $p,q$ are all scalars.  Let us assume without loss of generality that $p < q$.
 
\begin{proposition}
A subset $I \subseteq \{1,2,\ldots,n\}$ of voters is an allowed coalition for the one-dimensional EDP model if and only if there exists a point $t\in\mathbb{R}$ with $v^{(i)} < t$ for all $i\in I$ and  $v^{(j)} > t$ for all $j\notin I$, or vice versa.
\end{proposition}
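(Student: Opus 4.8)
The plan is to characterize the EDP coalitions by reducing the preference comparison to a single dividing point. Recall that in the one-dimensional EDP model a voter $i$ votes for $\mathbf{p}$ precisely when $|v^{(i)} - p| < |v^{(i)} - q|$ and for $\mathbf{q}$ when the reverse inequality holds. Since we assume $p < q$, I would first show that this comparison is governed entirely by the midpoint $m = (p+q)/2$: squaring both sides (or expanding the absolute values) yields $|v^{(i)} - p| < |v^{(i)} - q|$ if and only if $v^{(i)} < m$. Thus the voters voting for $\mathbf{p}$ are exactly those with ideal point less than $m$, and those voting for $\mathbf{q}$ are exactly those greater than $m$ (the measure-zero tie case $v^{(i)} = m$ can be sidestepped by a generic choice of outcomes, or simply noted as not affecting which subsets are realizable).

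For the forward direction, suppose $I$ is an allowed coalition, so there exist outcome locations $p < q$ realizing it. Setting $t = m = (p+q)/2$ gives a point with $v^{(i)} < t$ for all $i$ in the $\mathbf{p}$-coalition and $v^{(j)} > t$ for all $j$ in the $\mathbf{q}$-coalition; depending on which of the two coalitions we have labeled $I$, this yields the stated condition or its ``vice versa'' alternative. For the converse, suppose there is a separating point $t$ with $v^{(i)} < t$ for $i \in I$ and $v^{(j)} > t$ for $j \notin I$. I would then produce outcomes realizing $I$: choose $p$ and $q$ symmetric about $t$, for instance $p = t - \varepsilon$ and $q = t + \varepsilon$ for any $\varepsilon > 0$, so that the midpoint is exactly $t$. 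By the midpoint characterization, the resulting $\mathbf{p}$-coalition is exactly $\{i : v^{(i)} < t\} = I$, as desired; the ``vice versa'' case is handled by swapping the roles of $p$ and $q$.

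The main subtlety—rather than a genuine obstacle—is the bookkeeping around which physical coalition ($\mathbf{p}$-voters or $\mathbf{q}$-voters) is identified with $I$, which is precisely why the statement includes the ``or vice versa'' clause; this reflects the fact noted in the text that the set of allowed coalitions is closed under complement. The only other point requiring a word of care is the tie case $v^{(i)} = t$: since we may always perturb $\varepsilon$ or $t$ slightly to avoid any ideal point lying exactly at the midpoint without changing the separation of the remaining points, ties do not obstruct realizability. With these two minor points addressed, the equivalence follows immediately from the midpoint computation.
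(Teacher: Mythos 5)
Your proposal is correct and follows essentially the same route as the paper: both arguments reduce the EDP comparison to the midpoint $\frac{p+q}{2}$ acting as the dividing point $t$, with the paper simply stating this observation and leaving the two directions implicit. Your additional care about ties and the labeling of coalitions is fine but does not change the substance of the argument.
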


\begin{proof}
This follows immediately from the observation that all voters with ideal point to the left of the outcome midpoint $\frac{p+q}{2}$ will vote for outcome $p$ while all those to the right will vote for $q$. 
\end{proof}

This accords with the multitude of politically split Court cases where a liberal bloc of four justices vote one way, a conservative bloc of four justices votes the opposite way, and the central justice is the ``swing vote'' going back and forth between the two blocks---thus single-handedly determining the outcome---depending on the particular case.  This model also allows for rulings where, say, the two most liberal justices form a minority versus the majority of the remaining seven.  

\begin{proposition}
The allowed coalitions for the one-dimensional SOI model are the same as for the EDP model.
\end{proposition}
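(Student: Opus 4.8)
The plan is to show that the SOI model and EDP model allow exactly the same coalitions in one dimension by proving mutual containment. Since the preceding proposition characterizes EDP coalitions as precisely the ones separated by a single threshold point $t$ (i.e.\ the coalition is an ``initial segment'' or ``final segment'' of the voters ordered along the line), it suffices to show that the SOI allowed coalitions are exactly these threshold-separated sets. I would first verify the easier direction, that every EDP coalition is realizable in the SOI model, and then handle the direction that every SOI coalition is threshold-separated.

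For the direction that EDP coalitions are SOI-allowed, recall the running assumption $p<q$ together with the SOI constraint that each voter lies in exactly one outcome sphere. Given a target threshold $t$, I would place the two outcome points on either side of $t$ and choose their strengths (radii) so that the sphere around $\mathbf{p}$ (an interval in $\mathbb{R}$) captures precisely the voters with $v^{(i)}<t$ while the sphere around $\mathbf{q}$ captures those with $v^{(i)}>t$, with no voter in the overlap. Concretely one can exploit the remark in the SOI definition that the spheres may overlap as long as no voter sits in the intersection: take both radii large enough that the corresponding intervals cover all the relevant voters but arrange the centers and radii so that the boundary between ``in $\mathbf{p}$'s interval only'' and ``in $\mathbf{q}$'s interval only'' falls at $t$. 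This reproduces any single-threshold split, which is exactly the family of EDP coalitions.

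For the reverse direction, that every SOI-allowed coalition is separated by a threshold, the key observation is that in one dimension each outcome sphere is simply a closed interval, say $[p-s(\mathbf p),\,p+s(\mathbf p)]$ and $[q-s(\mathbf q),\,q+s(\mathbf q)]$. Under the exactly-one-sphere assumption, the voters voting for $\mathbf{p}$ are those in the first interval but not the second, and symmetrically for $\mathbf{q}$; I would analyze the possible configurations of two intervals on the line (disjoint, nested, or properly overlapping with no voter in the intersection) and argue that in each case the set of voters assigned to $\mathbf p$ forms a contiguous block to one side of some point $t$, with the $\mathbf q$-voters on the other side. Since each sphere is an interval and the overlap contains no voters, the partition of voters collapses to a single cut point, giving exactly the threshold structure of the previous proposition.

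The main obstacle is the case analysis in the reverse direction, where I must ensure that no configuration of two intervals with the exactly-one-coverage condition produces a non-contiguous coalition (for instance, a situation where $\mathbf p$'s voters appear on both the far left and far right with $\mathbf q$'s voters sandwiched in the middle). Ruling this out requires carefully using that each voter is covered by \emph{exactly} one interval, so that the covered-but-not-doubly-covered region assigned to each outcome is itself an interval (possibly a half-line relative to the finite point set); I expect this to reduce cleanly to ``a set which is one interval minus another interval, under the no-common-voter hypothesis, is separated from its complement by a single point,'' which is the crux of the argument.
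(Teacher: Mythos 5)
Your proposal is correct and takes essentially the same route as the paper: in one dimension the outcome spheres are closed intervals with no voter in their intersection, so a point between the two intervals (if disjoint) or inside their intersection (if not) serves as the EDP threshold, while conversely two sufficiently large intervals meeting near a given threshold realize any EDP split in the SOI model. The one obstacle you flag---a nested pair of intervals, where $B_{\mathbf{p}}\setminus B_{\mathbf{q}}$ is disconnected---is harmless because the inner interval then contains no voter ideal points at all (any such voter would lie in both spheres, violating the exactly-one assumption), so that configuration only yields the trivial all-versus-empty coalition pair, which is separated by a point beyond all the voters.
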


\begin{proof}
The outcome spheres in one dimension are just closed intervals, and we always assume in this model that there are no voter ideal points in the intersection of the two outcome spheres, so all voter ideal points in one interval are separated from the other interval either by any point between the two intervals (if they are disjoint) or by any point in the intersection of the two intervals (if they are not disjoint).
\end{proof}

\begin{proposition}
A subset $I \subseteq \{1,2,\ldots,n\}$ of voters is an allowed coalition for the one-dimensional TIOLI model if and only if there exists an interval $[a,b] \subseteq \mathbb{R}$ such that $v^{(i)} \in [a,b]$ precisely when $i\in I$, or precisely when $i \notin I$. 
\end{proposition}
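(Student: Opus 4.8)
The plan is to exploit the fact that in one dimension the TIOLI preference functions depend only on the location $p$ and strength $s(\mathbf{p})$ of the single distinguished outcome $\mathbf{p}$, while the location of $\mathbf{q}$ is irrelevant. The set of voters choosing $\mathbf{p}$ is precisely those whose ideal point satisfies $|v^{(i)} - p| \le s(\mathbf{p})$, i.e.\ those lying in the closed interval $[p - s(\mathbf{p}),\, p + s(\mathbf{p})]$, and a voter chooses $\mathbf{q}$ exactly when its ideal point avoids this interval. So every parameter choice produces an interval-defined coalition together with its complement. This immediately suggests a correspondence between closed intervals $[a,b]$ with $a \le b$ and admissible parameter pairs, via $p = \frac{a+b}{2}$ and $s(\mathbf{p}) = \frac{b-a}{2}$.

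First I would prove the forward direction. Assuming $I$ is an allowed coalition, there are parameters $p$ and $s(\mathbf{p}) \ge 0$ for which $I$ equals either the $\mathbf{p}$-coalition or the $\mathbf{q}$-coalition. Writing $a = p - s(\mathbf{p})$ and $b = p + s(\mathbf{p})$, the $\mathbf{p}$-coalition is exactly $\{\, i : v^{(i)} \in [a,b]\,\}$, so in the first case $v^{(i)} \in [a,b]$ precisely when $i \in I$, and in the second case precisely when $i \notin I$. This is exactly the stated dichotomy.

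For the reverse direction I would take an interval $[a,b]$ witnessing membership (or non-membership) in $I$ and define $p = \frac{a+b}{2}$ and $s(\mathbf{p}) = \frac{b-a}{2}$; the nonnegativity $s(\mathbf{p}) \ge 0$ follows from $a \le b$, and the closed interval $[p - s(\mathbf{p}),\, p + s(\mathbf{p})]$ recovers $[a,b]$, so the $\mathbf{p}$-coalition is $\{\, i : v^{(i)} \in [a,b]\,\}$. When the interval captures $I$ directly, this $\mathbf{p}$-coalition is $I$ and we are done; when it captures the complement, the $\mathbf{p}$-coalition is $\{\, i : i \notin I\,\}$, so the $\mathbf{q}$-coalition is $I$, and I would invoke the closure of the allowed coalitions under complement (noted in the definition of allowed coalition) to conclude that $I$ is allowed.

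Rather than a genuine obstacle, the only points requiring care are matching the closed interval against the $\le$ appearing in the preference function and ensuring the strength is nonnegative; both are handled automatically by the parametrization above, with the degenerate case $a = b$ (strength zero) posing no issue. The ``or vice versa''-style phrasing in the statement is accounted for entirely by the complement-closure of the coalition set, which lets a single interval simultaneously witness a coalition and its complement.
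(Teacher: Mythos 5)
Your proof is correct and takes essentially the same route as the paper, which simply asserts that the claim "follows immediately from the definition of the preference functions"; your write-up just makes explicit the bijection $[a,b] \leftrightarrow (p,s(\mathbf{p})) = (\frac{a+b}{2},\frac{b-a}{2})$ and the use of complement-closure that the paper leaves implicit.
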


\begin{proof}
This follows immediately from the definition of the preference functions for this model.
\end{proof}

While we see here that the one-dimensional TIOLI model allows votes like the bizarre minority coalition of far left Ginsburg with far right Thomas that occurred in the 7-to-2 case \emph{Alabama Department of Revenue v.\ CSX Transportation, Inc.} (2015)---see Figure \ref{fig:1mds2}---none of our three one-dimensional models allows the scrambled 5-to-4 vote mentioned in the introduction.  To get a better grasp of that vote we need to turn to the second dimension of the Supreme Court.

\begin{figure}
\begin{center}
\includegraphics[trim={0.8in 1.015in 0.4in 0.8in},clip,scale=2.5]{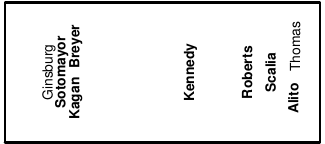}
\end{center}
\caption{A one-dimensional multidimensional scaling layout of judicial ideal points for the 2009--2015 natural court, with the names for the majority in \emph{Alabama Department of Revenue v.\ CSX Transportation, Inc.} (2015) in bold.}
\label{fig:1mds2}
\end{figure}

\subsection{Coalitions in two dimensions (and higher)}

The classification of allowed coalitions for the EDP model has an immediate, and straightforward, extension to all higher dimensions: 

\begin{proposition}\label{prop:EDPhigher}
The allowed EDP coalitions are the sets of voters whose ideal points can be separated from the remaining ideal points by a hyperplane in $\mathbb{R}^{d-1}\subseteq \mathbb{R}^d$.
\end{proposition}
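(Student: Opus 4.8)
The plan is to reduce the EDP voting criterion to a linear separation condition and then observe that every hyperplane arises as a perpendicular bisector. First I would recall that in the EDP model voter $i$ casts a ballot for $\mathbf{p}$ precisely when $\|\mathbf{v}^{(i)} - \mathbf{p}\| < \|\mathbf{v}^{(i)} - \mathbf{q}\|$. Squaring both sides and expanding cancels the common quadratic term $\|\mathbf{v}^{(i)}\|^2$, leaving the \emph{linear} inequality $2\,\mathbf{v}^{(i)}\cdot(\mathbf{q}-\mathbf{p}) < \|\mathbf{q}\|^2 - \|\mathbf{p}\|^2$. This exhibits the set of ideal points voting for $\mathbf{p}$ as an open half-space whose bounding hyperplane passes through the midpoint $\tfrac{1}{2}(\mathbf{p}+\mathbf{q})$ with normal direction $\mathbf{q}-\mathbf{p}$, i.e. the perpendicular bisector of the segment $\overline{\mathbf{p}\mathbf{q}}$.

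With this translation in hand the forward implication is immediate: any allowed EDP coalition is by definition the set of voters strictly closer to one outcome than to the other, and so lies on one side of this bisector hyperplane, strictly separated from its complement on the other side.

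For the converse I would argue that every hyperplane is realizable as such a bisector. Given a hyperplane $H$ that separates the ideal points indexed by $I$ from those indexed by its complement, I would pick any point $\mathbf{p}$ strictly on the $I$-side of $H$ and set $\mathbf{q}$ to be the mirror reflection of $\mathbf{p}$ across $H$. By construction $H$ is exactly the perpendicular bisector of $\overline{\mathbf{p}\mathbf{q}}$, so the EDP model with outcomes $\mathbf{p},\mathbf{q}$ reproduces precisely the coalition $I$ (the voters on the $\mathbf{p}$-side) together with its complement.

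The one point requiring care—and the main obstacle—is the interplay between the strict EDP inequality and the meaning of ``separated by a hyperplane.'' Because the preference comparison is strict, a voter equidistant from both outcomes registers a tie and contributes to no coalition; consequently an allowed coalition must correspond to a \emph{strictly} separating hyperplane, with no ideal point lying on $H$. I would therefore state explicitly that ``separated by a hyperplane'' is to be read in this strict sense, after which the perpendicular-bisector correspondence delivers the equivalence exactly in both directions.
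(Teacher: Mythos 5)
Your proof is correct and follows essentially the same route as the paper's: identify the EDP decision boundary as the perpendicular bisector of the segment joining $\mathbf{p}$ and $\mathbf{q}$, and realize any separating hyperplane as such a bisector by reflecting a point chosen on one side across it. Your explicit algebraic cancellation of the quadratic terms and your remark about reading ``separated'' in the strict sense are welcome elaborations that the paper leaves implicit, but the underlying argument is the same.
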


\begin{proof}
Consider the line segment joining the outcome points $\mathbf{p},\mathbf{q}\in\mathbb{R}^d$.  The hyperplane orthogonal to this line segment and passing through its midpoint provides the decision boundary where on one side all voters choose one outcome and on the other side they all choose the other outcome.  Conversely, given a hyperplane separating the two coalitions, we can take as the outcome locations any point on one side of the hyperplane and the reflection of this point across the hyperplane.
\end{proof}

\begin{remark}\label{rem:Voronoi}
Since one of our points of emphasis in this paper is bringing more computational geometry into voting theory, it is worth noting that the preceding proposition has a nice generalization to the EDP model where more than two vote outcomes are considered: the Voronoi diagram \cite{Voronoi} constructed from the configuration of outcome locations in $\mathbb{R}^d$ subdivides Euclidean space into convex polyhedral regions, one such for each outcome point, and the voters in the region corresponding to a particular outcome are precisely those who in the EDP model will vote for that outcome.
\end{remark}

We can now characterize the relationship between the allowable coalitions provided by these three models, in any dimension.

\begin{theorem}\label{thm:coals}
Fix any dimension $d \ge 1$.
\begin{enumerate}
\item A set of voters is an EDP allowed coalition if and only if it is an SOI allowed coalition.  
\item Every EDP/SOI allowed coalition is a TIOLI allowed coalition, but the converse does not always hold.
\end{enumerate}
\end{theorem}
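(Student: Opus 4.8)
The plan is to reduce everything to one observation: comparing two Euclidean distances, or two sphere-membership tests, produces a \emph{linear} decision boundary because the quadratic terms cancel. Throughout I would invoke Proposition \ref{prop:EDPhigher}, which says the EDP allowed coalitions are exactly the hyperplane-separable ones, and since only finitely many voters are involved I may always assume a separating hyperplane passes through no voter (so separations are strict).

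For part (1) I would prove the two inclusions separately. For SOI $\Rightarrow$ EDP, fix an SOI configuration with outcome spheres $B(\mathbf{p},s(\mathbf{p}))$ and $B(\mathbf{q},s(\mathbf{q}))$, each voter lying in exactly one, and consider the function
\[ g(\mathbf{x}) = \big(\|\mathbf{x}-\mathbf{p}\|^2 - s(\mathbf{p})^2\big) - \big(\|\mathbf{x}-\mathbf{q}\|^2 - s(\mathbf{q})^2\big), \]
the difference of the two ``powers.'' The $\|\mathbf{x}\|^2$ terms cancel, leaving an affine function whose zero set is a hyperplane (the radical hyperplane of the two spheres). A voter inside $B(\mathbf{p},\cdot)$ but outside $B(\mathbf{q},\cdot)$ has $g<0$, while a voter inside $B(\mathbf{q},\cdot)$ but outside $B(\mathbf{p},\cdot)$ has $g>0$, so this hyperplane strictly separates the two coalitions and Proposition \ref{prop:EDPhigher} applies. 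For the reverse inclusion EDP $\Rightarrow$ SOI, I would take a separating hyperplane $H$ and realize the same split with two \emph{equal-radius} spheres: place $\mathbf{p}$ and $\mathbf{q}$ as mirror images across $H$ at large distance, with common radius equal to that distance. A direct computation shows that once the distance is taken large enough, each ``own-side'' voter lands inside its own sphere while every voter is excluded from the opposite sphere, so the ``exactly one sphere'' hypothesis holds and the prescribed coalition results.

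For part (2), the containment EDP/SOI $\Rightarrow$ TIOLI follows by viewing a half-space as a limit of large balls. Given a hyperplane $H$ separating coalition $I$ from its complement, I would center a ball of radius $R$ far out along the inward normal on the $I$-side; in suitable coordinates the membership condition becomes $x_1 \le -\|\mathbf{x}\|^2/(2R)$, which for $R$ large captures exactly the $I$-voters and excludes the rest, so taking this ball as the single TIOLI sphere yields $I$. Combined with part (1), every SOI coalition is then TIOLI as well. To see the converse fails, I would exhibit a TIOLI coalition that is not hyperplane-separable: take any voter whose ideal point lies in the convex hull of the others (for instance the middle of three collinear voters). A small ball about that point realizes the singleton as a TIOLI coalition, yet a point in the convex hull of a set cannot be strictly separated from that set by a hyperplane, so by Proposition \ref{prop:EDPhigher} it is not an EDP/SOI coalition. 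This works in every dimension $d\ge 1$ and recovers the one-dimensional phenomenon already seen with the Ginsburg--Thomas minority.

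The main obstacle is the SOI $\Rightarrow$ EDP direction: the sphere boundaries are genuinely curved, so it is not a priori clear that the induced partition is linearly separable. The crucial realization is that the relevant boundary is neither sphere but the radical hyperplane between them, at which point the quadratic terms cancel and linearity drops out. The remaining directions are routine ``inflate a sphere until it approximates a hyperplane'' arguments together with the single convex-position counterexample.
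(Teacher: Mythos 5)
Your proof is correct, and in its two most substantive steps it takes a genuinely different route from the paper's. For the key direction SOI $\Rightarrow$ EDP, the paper argues by cases: if the two outcome spheres are disjoint it takes any hyperplane between them, and if they overlap it takes the hyperplane spanned by the lower-dimensional sphere in which the two boundary spheres meet, asserting (with a figure) that $B_{\mathbf{p}}\setminus B_{\mathbf{q}}$ and $B_{\mathbf{q}}\setminus B_{\mathbf{p}}$ lie on opposite sides. Your radical-hyperplane computation produces that same hyperplane in the overlapping case but treats both cases uniformly, and it replaces the geometric assertion with a one-line sign check on the affine function $g$; this is cleaner, manifestly dimension-independent, and is the step the paper leaves most informal. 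Your EDP $\Rightarrow$ SOI and EDP $\Rightarrow$ TIOLI constructions coincide with the paper's (two large spheres tangent to the separating hyperplane, respectively a single such sphere), though you are more explicit than the paper about verifying the ``each voter lies in exactly one outcome sphere'' hypothesis --- one small caution: in your phrase ``common radius equal to that distance,'' the radius must be the distance from each center to $H$ (so the spheres are tangent to $H$), not the distance between the two centers, or the overlap would swallow voters near $H$; your stated verification makes clear you intend the former. Finally, for the strictness of the TIOLI containment the paper simply defers to the empirical examples of \S\ref{sec:SCexam}, whereas you give a self-contained synthetic counterexample (a voter in the convex hull of the others, realized as a TIOLI singleton by a small ball), with the convex-hull obstruction correctly ruling out strict hyperplane separation; this is the more rigorous choice and works in every dimension $d\ge 1$.
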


Thus the SOI model sees exactly the same possible voting patters as the EDP model, but the two models arrive at these voting patterns in different ways, so to speak---whereas the TIOLI model is more flexibility and includes every voting pattern provided by the other two models but allows for strictly more voting patterns. 

\begin{proof}
If a coalition is EDP allowed then by Proposition \ref{prop:EDPhigher} there exists a hyperplane separating the ideal points of the coalition and its complementary coalition.  Place spheres of radius $r$ on either side of this hyperplane, both with centers distance $r$ from the hyperplane and such that the line segment between the two centers is perpendicular to the hyperplane (thus both spheres are tangent to the hyperplane and tangent to each other---see Figure \ref{fig:proof}(a)).  As $r \rightarrow \infty$ the two spheres locally (which is to say, near the fixed configuration of voters) approach the hyperplane, and since there are only finitely many voters there exists a finite value of $r$ for which the SOI vote with these two spheres used as the outcome spheres coincides with the EDP vote corresponding to this hyperplane.  By removing one of the two spheres we also get this vote from the TIOLI model.  

\begin{figure}
\begin{center}
(a)~\includegraphics[trim={1.75in 2.7in 1.55in 1.7in},clip,scale=0.65]{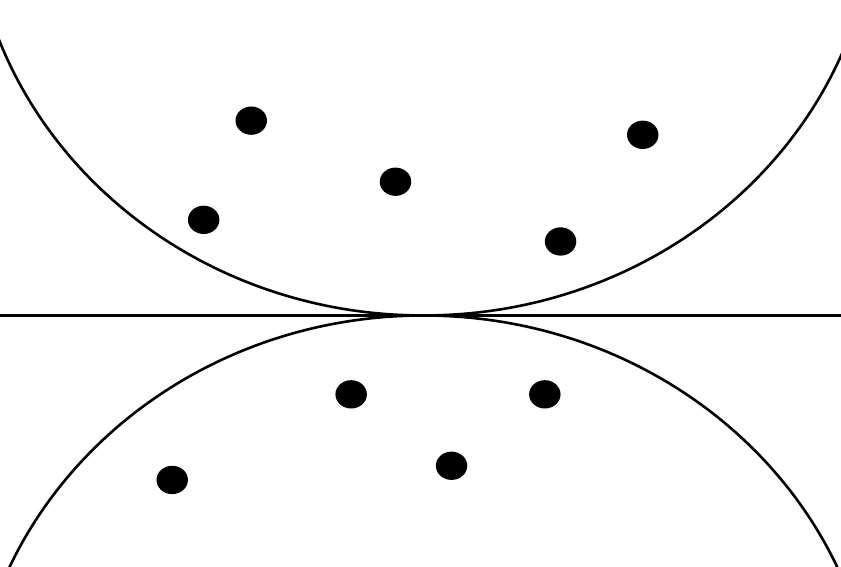}\\
\vspace{0.2in}
(b)\includegraphics[trim={0.0in 2.9in 0.0in 0.0in},clip,scale=0.65]{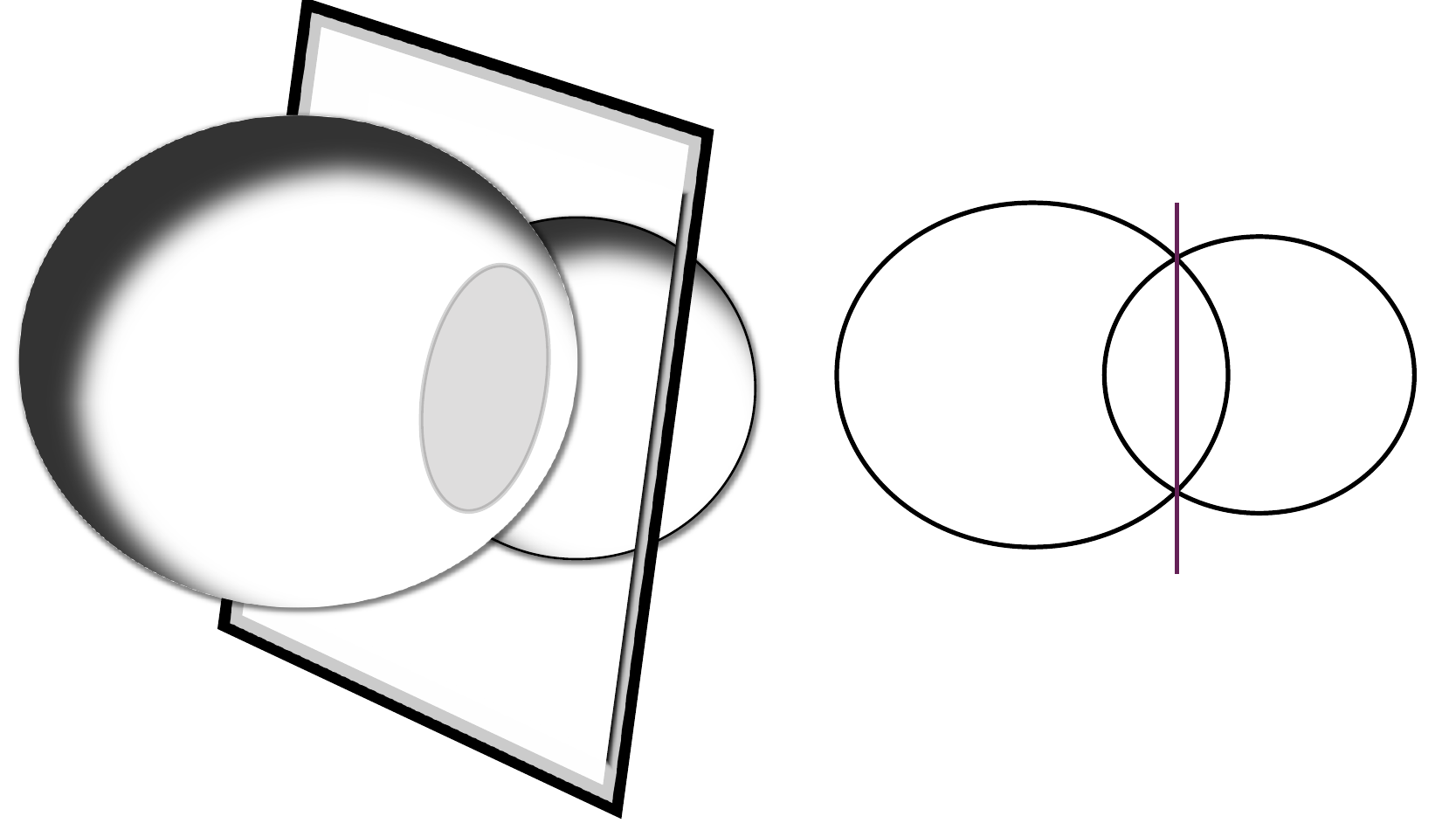}
\end{center}
\caption{Illustrations of the constructions used in the proof of Theorem \ref{thm:coals}.  In (a) we see that the decision boundaries provided by two spheres can be made to approach that of a hyperplane by taking the radii to be sufficiently large.  In (b) we see that, in any dimension, the intersection of the surfaces of two overlapping spheres is the surface of a sphere of one smaller dimension and the hyperplane it spans separates the portions of the solid spheres that lie outside their  intersection.}
\label{fig:proof}
\end{figure}

Thus we have shown that EDP-allowed coalitions are also SOI-allowed and TIOLI-allowed, and we shall see examples in \S \ref{sec:SCexam} of coalitions that are TIOLI-allowed but not EDP- or SOI-allowed, so all that remains is to show that SOI-allowed implies EPD-allowed.  Recall that a SOI-allowed coalition $I \subseteq \{1,\ldots,N\}$ means there is a sphere, call it $B_{\mathbf{p}}$, of radius $s(\mathbf{p})$ centered at $\mathbf{p}$ containing all the ideal points $\mathbf{v}^{(i)}$ for $i\in I$ and none of the ideal points from the complementary coalition $J := \{1,\ldots,N\}\setminus I$, and there is a sphere $B_{\mathbf{q}}$ of radius $s(\mathbf{q})$ centered at $\mathbf{q}$ containing all the $\mathbf{v}^{(j)}$, $j\in J$, and none of the $\mathbf{v}^{(i)}$, $i\in I$.  If these two spheres are disjoint then clearly there exists a hyperplane separating them and hence separating the two coalitions, so let us assume that the spheres overlap, $B_{\mathbf{p}} \cap B_{\mathbf{q}} \ne \varnothing$.  The intersection of the surfaces of these spheres is the surface of a sphere $B'$ of one dimension lower, and the hyperplane $\mathbb{R}^{d-1}$ spanned by $B'$ separates the two coalitions since the two regions $B_{\mathbf{p}} \setminus B_{\mathbf{q}}$ and $B_{\mathbf{q}} \setminus B_{\mathbf{p}}$ lie on opposite sides of this hyperplane (see Figure \ref{fig:proof}(b)) and there are no voter ideal points in the intersection $B_{\mathbf{p}} \cap B_{\mathbf{q}}$ by assumption.
\end{proof}

Our focus for the remainder of the discussion in this section is to explore the coalitions allowed by these two-dimensional models in the context of the Supreme Court.  There are various ways of estimating the ideal points of the justices on the Supreme Court---indeed, this has been a topic of extensive investigation in the empirical legal studies literature \cite{MDS-vs-factor,JudicialCommonSpace,MartinQuinn,Peress-IdealPoint,SegalCover}.  For concreteness and simplicity, in this paper take the following approach (which is also what we used to create Figures \ref{fig:1mds} and \ref{fig:1mds2}).

\subsubsection{Estimating judicial ideal points}
Given a symmetric $N\times N$ matrix of nonnegative numbers, thought of as distances (or approximations thereof) between $N$ objects, and a positive integer $d$, \emph{multidimensional scaling} (MDS) produces $n$ points in Euclidean space $\mathbb{R}^d$ such that their pairwise distances are close to the entries of the input matrix \cite{MDS2,MDS1}.  To use MDS to compute Supreme Court ideal points, scholars typically\footnote{Schubert largely initiated this approach in his seminal 1965 treatise \cite{Schubert}, but Brazill and Grofman in 2002 popularized, modernized, and refined it, while also comparing it (favorably) to related methods like factor analysis \cite{MDS-vs-factor}.} take the $ij$ entry of the $9\times 9$ input matrix to be one minus the fraction of cases during a given natural court that Justice $i$ and Justice $j$ voted ``together''---meaning either both with the majority or both against the majority\footnote{This means the disposition of the decision is what matters rather than the opinion, so for instance if two justices ultimately reach the same conclusion but for different reasons and consequently write separate opinions, they are nonetheless counted as voting together.}---and a widely used way of accessing this information, which we use in the present paper, is the binary ``majority'' variable in the Spaeth Supreme Court Database \cite{SCDB}.  

While the exact coordinates of ideal points vary across the different estimation methods, in one dimension ($d=1$) the order of the nine justices for each natural court is remarkably consistent across the different methods and is generally compatible with expert opinion of the political leanings of the justices \cite[pp.1693--4]{2ndDimension}---so one can view the first dimension in all these methods as quantifying the liberal-to-conservative spectrum.\footnote{A word of caution: MDS is only well-defined up to coordinate reflections, so liberal-to-conservative may show up as left-to-right or right-to-left, but it is usually straightforward to determine which is which and fix the direction.}  The second dimension for MDS ideal point estimation has been studied in \cite{2ndDimension}, where a legal interpretation is proffered: the $y$-axis often roughly tracks with the judicial philosophies of legalism versus pragmatism.  We do not require this interpretation for what follows, but the reader may find it nonetheless enriching to consider and so is encouraged to read \cite{2ndDimension}.

\subsubsection{Examples from the Supreme Court}\label{sec:SCexam}

\begin{figure}
\begin{center}
\includegraphics[trim={0.49in 0.6in 0.39in 0.75in},clip,scale=1.09]{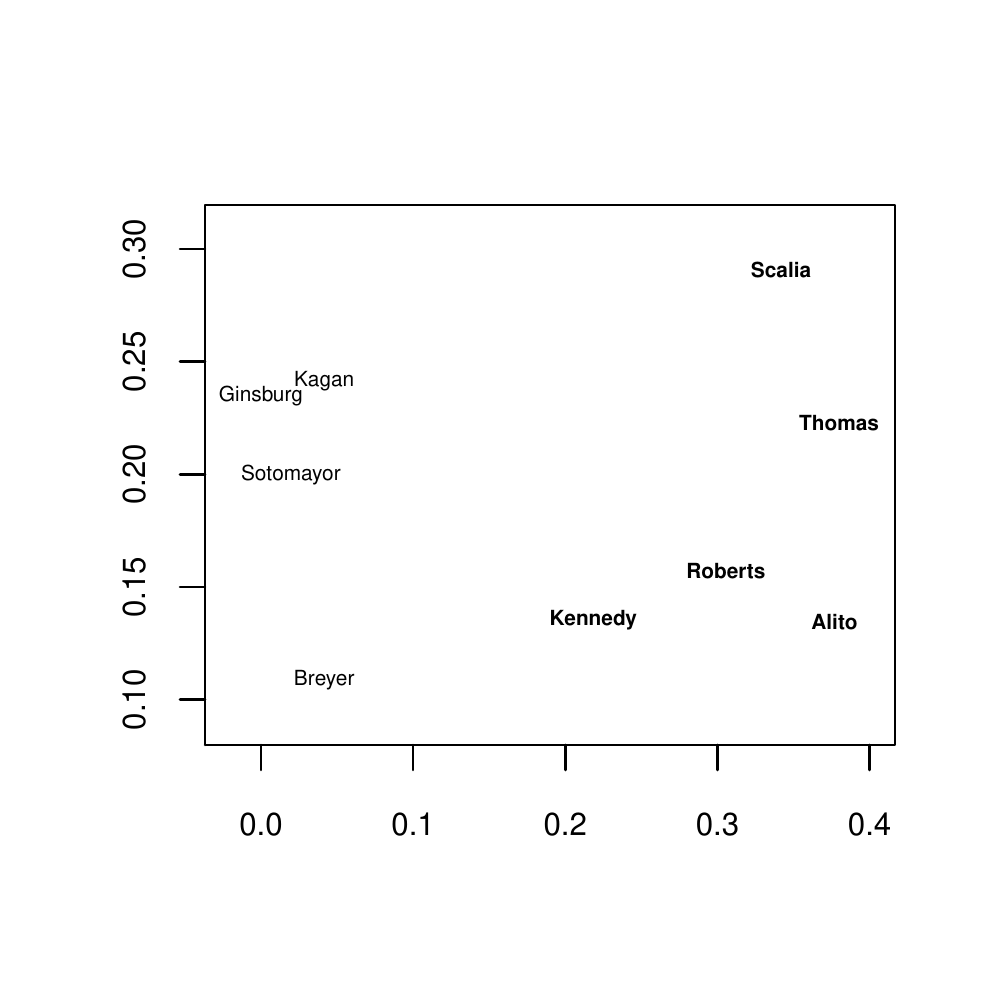}
\includegraphics[trim={0.81in 0.6in 0.39in 0.8in},clip,scale=1.09]{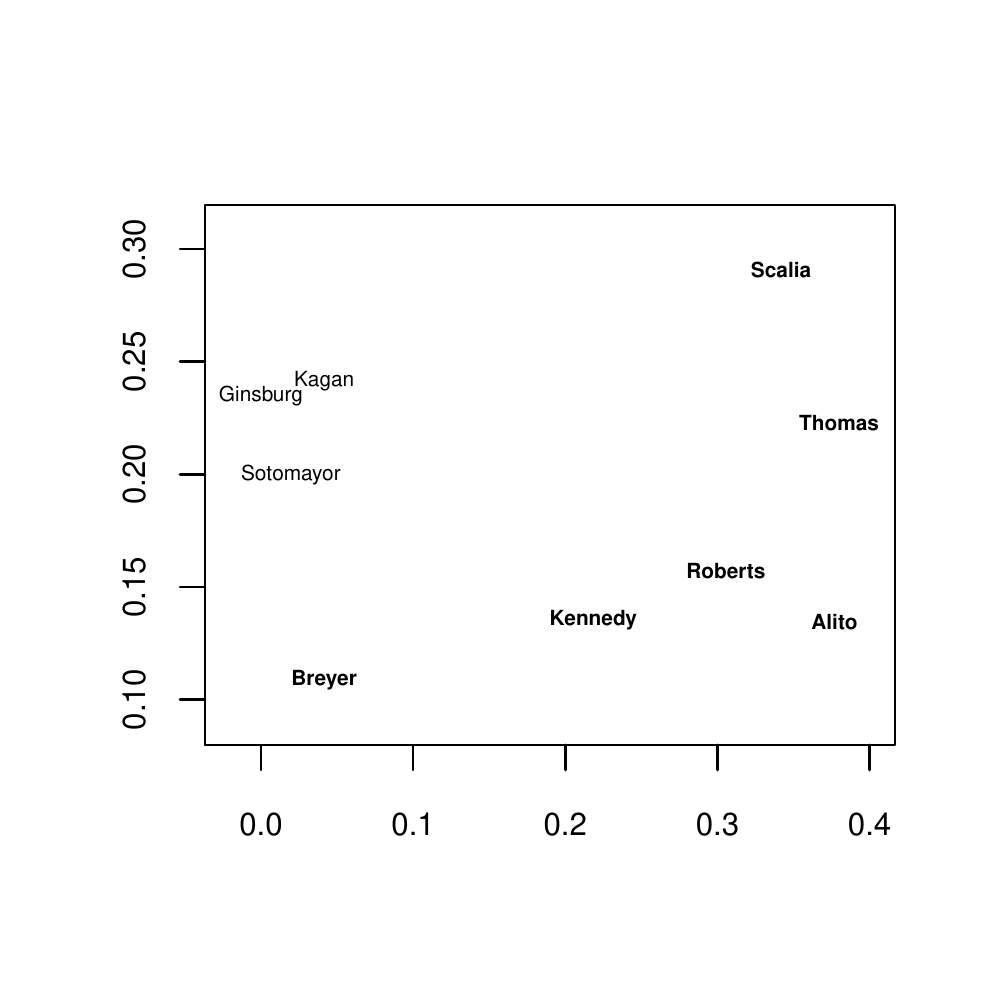}
\includegraphics[trim={0.49in 0.6in 0.39in 0.75in},clip,scale=1.09]{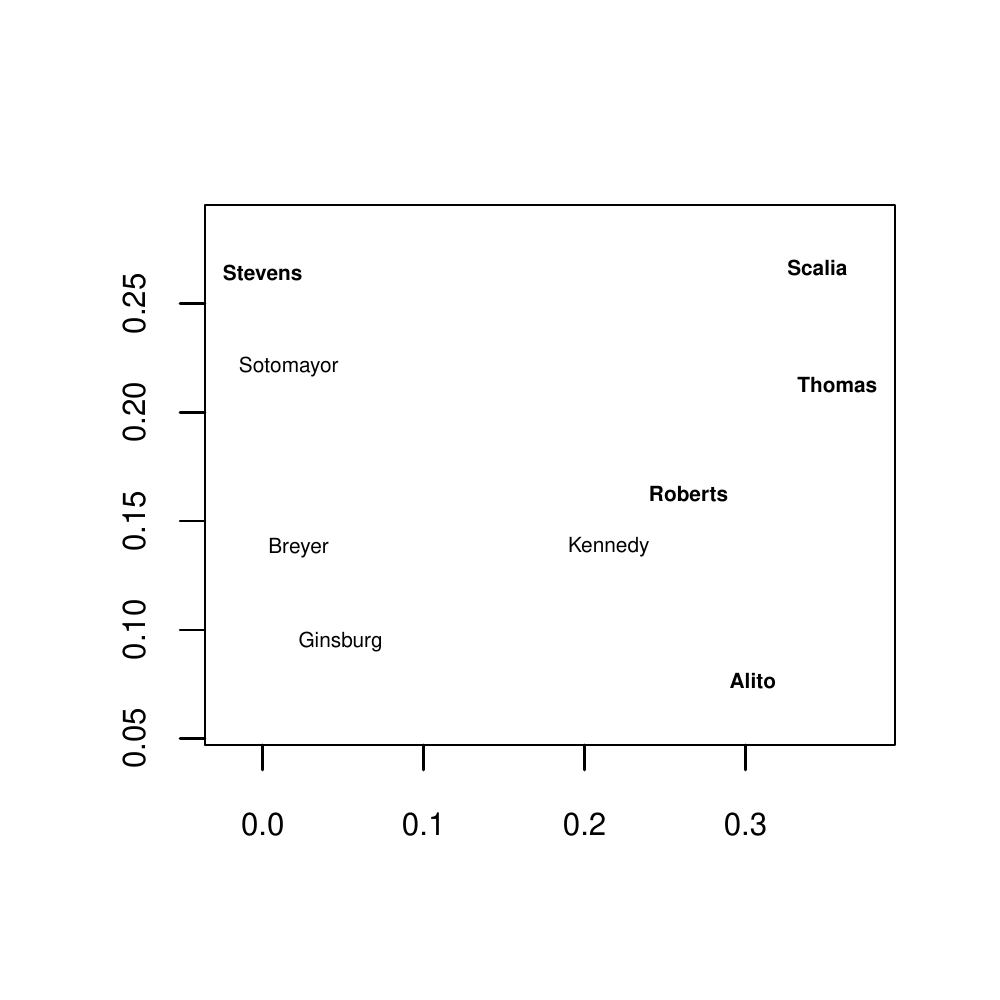}
\includegraphics[trim={0.81in 0.6in 0.39in 0.8in},clip,scale=1.09]{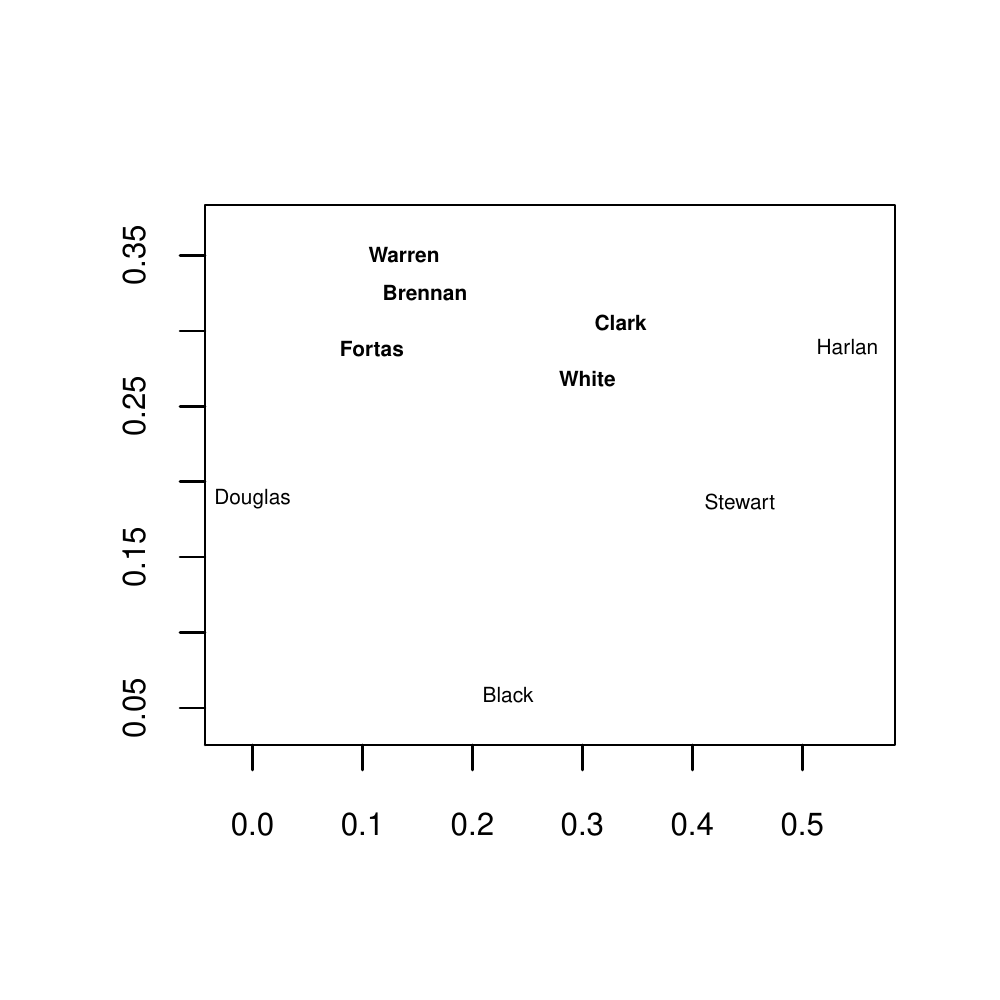}
\end{center}
\caption{The two-dimensional MDS ideal point estimates for the 2009--2015 natural court (top row), the 2009 natural court (bottom left), and the 1965--1966 natural court (bottom right), with bold names indicated the majority coalition for a particular case.  The top-left coalitions are separated by a vertical line and so the case was split along the usual left-right political axis.  The top-right vote---e.g., \emph{J. McIntyre Machinery, Ltd.\ v.\ Nicastro} (2011)---is similar except now Breyer abandoned the liberal bloc and so the dividing line becomes diagonal.  In both the bottom two cases the TIOLI model provides the most robust interpretation, where on the bottom-left---e.g., \emph{New Process Steel v.\ NLRB} (2010)---the minority votes are all in fairly close proximity to a point located around $(0.1, 0.16)$ and in the bottom-right---e.g., \emph{Ginzburg v.\ United States} (1965)---the majority votes are in even closer proximity to each other, with a focal point located around $(0.22, 0.24)$.}
\label{fig:fourcases}
\end{figure}

In Figure \ref{fig:fourcases} we plot the two-dimensional MDS ideal point locations of the nine justices on the 2009--2015 natural court (top row), the 2009 natural court (bottom left), and the 1965--1966 natural court (bottom right); in each of the four plots the names in bold are the majority coalition for at least one case heard during that natural court.  

The top-left plot's minority/majority division is one that occurred frequently: it clearly splits along left/right political lines, and here Kennedy sides with the conservatives (which he did more often than not).  The coalitions for this vote are allowed by all the models we have discussed, including the one-dimensional version of all three models since the one-dimensional MDS ideal point configuration is obtained simply by projecting down to the $x$-axis.  The top-right plot---which occurred, for instance, in \emph{J. McIntyre Machinery, Ltd.\ v.\ Nicastro} (2011)---is more interesting.  Here Breyer has abandoned the liberal bloc, but the two-dimensional layout suggests why: the $y$-coordinate (however it should be interpreted \cite{2ndDimension}) of his ideal point is too far from those of the rest of the liberal bloc.  The result is that the majority/minority coalitions here are no longer divided by a vertical line (the most common conceptualization) nor even a horizontal line (the divisions studied in \cite{2ndDimension}), but they are still separated by a line, albeit a diagonal one, so this pair of coalitions is allowed by the EPD model, and hence by Theorem \ref{thm:coals} by the SOI and TIOLI models as well.

In the lower-left plot---which occurred, for instance, in \emph{New Process Steel v.\ NLRB} (2010)---the two coalitions are almost separated by a diagonal line but not quite: the majority voting Alito and Stevens are too far left for this to be possible.  On the other hand, all four minority justice ideal points are in fairly close proximity to each other and it is easy to picture a circle centered around the point $(0.1, 0.16)$ that encompasses precisely the minority coalition, so the minority/majority coalitions here are allowed by the TIOLI model even though they are not allowed by the EDP or SOI models.  

For the lower-right plot, which depicts a vote that occurred in \emph{Ginzburg v.\ United States} (1965), there is in fact a line that separates the majority from the minority, but just barely.  A slight deviation of the ideal point estimations---and one should recall that there is always error/uncertainty when making such estimations---could easily break this delicate configuration and result in one for which this vote is not allowed by the EDP model.  On the other hand, there is a very clear clustering of the majority ideal points so this pair of coalitions is allowed by the TIOLI model and it is quite robust for that model, in the sense that fairly large perturbations of the ideal points would not prevent this vote division from being allowed.  In fact, the judicial ideal points are almost distributed radially around a point located roughly at $(0.22, 0.24)$ and for this particular case the inner ring of justices near this point formed the majority and the outer ring further away from this point formed the minority.  


\section{Computational geometry}\label{sec:compgeom}

Here we explore some of the algorithmic aspects of working with spatial voting models in two dimensions.  Specifically, we show how to compute the allowed coalitions for a given collection of voter ideal points under each of the EDP, SOI, and TIOLI models, and we discuss the problem---and present a solution for the SOI model---of locating the two outcomes points that could have resulted in a given pair of complementary allowed coalitions.

\subsection{The EDP model}

Recall that, by Proposition \ref{prop:EDPhigher}, the allowed coalitions for the EDP model are the subsets of voters whose ideal points can be separated from those of the complementary set by a hyperplane in $\mathbb{R}^d$.  Such sets of points have been studied in the discrete geometry literature and go by the name ``$k$-set'' (here $k$ refers to the number of points one is trying to separate from the remaining points).  In dimension $d=2$ this problem has been studied from a computational perspective (see, e.g., \cite{k-set}) so for a large number of voters one can implement the algorithms presented there.  For a small number of votes, such as the $N=9$ in Supreme Court voting, there is an easier method that was kindly communicated to us by Michael Burr, which we next describe.  We present this method for $N=9$ and $k=4$, so that one is finding all allowed 5-to-4 votes under EDP, but it is straightforward to extend it to other values of $k$ and other small values of $N$.  We assume here that the voter ideal points are in general linear position, which means no three of them are collinear.\footnote{When working with real data this almost always holds, and if it doesn't then one can simply perturb the points slightly so that it does.}   

At first glance, to find all collections of four points in the plane that can be separated from the remaining five by a line would require testing the infinitely many lines in the plane---but thankfully this can be reduced to a finite problem.  For any line that separates the ideal points into a 5-to-4 division, we can slide that line without changing its slope until it meets exactly one or two of the nine ideal points (by the general linear position assumption).  After doing so, if the line meets only one ideal point then we can rotate it about this point in either direction until it meets exactly two ideal points.  There are then a few possibilities: either both ideal points are in the four-voter coalition, both ideal points are in the five-voter coalition, or one ideal point is in each coalition.  This shows that every EDP allowed 5-to-4 coalition split can be found as follows.  For each of the $\binom{9}{2} = 36$ pairs of ideal points, call them $a$ and $b$, consider the line $L$ they span, and:
\begin{itemize}
\item if there are two and five ideal points on either side of $L$, then the two points together with $a$ and $b$ yields a four-voter coalition;
\item if there are three and four points on either side of $L$, then the four points form a coalition, the three points with $a$ form a coalition, and the three points with $b$ form a coalition.
\end{itemize}
The four-voter coalitions obtained above, together with their complementary five-voter coalitions, form all the allowed EDP coalitions.

It is not really possible to depict in a visually meaningful way (as we shall do below for SOI) the regions where the two vote outcomes could be that yield a given EDP allowed coalition, because the location of one outcome depends on that of the other.  Indeed, for any line that splits the voters into two coalitions, we can take any point on one side of the line as one outcome location and reflect this point across the line to obtain a second outcome location that yields the same dividing line and hence the same pair of coalitions.  An interactive program might be useful: the user could place two points in the plane and the program would draw the corresponding dividing line and then the user could see how the voters are split by this line as the points are dragged around.  In fact, such software already exists---even when more than two vote outcomes are considered (cf. Remark \ref{rem:Voronoi})!  This is because the user is really plotting the Voronoi diagram determined by the outcome locations and there are many programs available, and easily found through a web search, for interactive Voronoi diagrams/tessellations.

\subsection{The SOI model}

For this model we recommend the following discrete approximation method which combines the two computational problems of (a) finding all allowed coalitions and (b) locating the two outcome points.
\begin{enumerate}
\item Choose a large square containing all $N$ of the voter ideal points (with the voters near the center of the square rather than any of its boundary).
\item Choose a grid resolution and do an iterated loop through all grid points in this square.
\item For each grid point, compute the $N$ Euclidean distances from this grid point to the $N$ voter ideal points, put these in increasing order, and record the corresponding order of the voters.  Then for each integer $1 \le k \le \lfloor\frac{N}{2}\rfloor$, label this grid point by the $k$ nearest voter ideal points and also by the $N-k$ nearest voter ideal points.
\end{enumerate}
The allowed SOI coalitions are then the subsets $I\subseteq \{1,\ldots,N\}$ and $J = \{1,\ldots,N\}\setminus I$ for which there exists a grid point labeled by $I$ and a grid point labeled by $J$.  For a fixed pair $I,J$ of complementary allowed coalitions, the possible locations of one vote outcome are all the grid points labeled by $I$ and the possible locations of the other vote outcome are all the grid points labeled $J$.  A word of caution, however: this only works if the surrounding square is sufficiently large and the grid resolution is sufficiently fine; in practice, one can start with small values for both parameters and increase them and see whether the results appear to be stabilizing.

We illustrate this in Figure \ref{fig:SOIplots} where we look at the same two votes as the top row of Figure \ref{fig:fourcases}, but now we additionally plot the majority's possible outcome locations with red dots and the minority's possible outcome locations with blue dots.  The inclusion/exclusion of Breyer's vote in the minority liberal bloc shifts the liberal outcome's region upwards and the conservative outcome's region downwards, as one would expect based on Breyer's location.

\begin{figure}
\begin{center}
\includegraphics[trim={0.49in 0.6in 0.36in 0.75in},clip,scale=1.08]{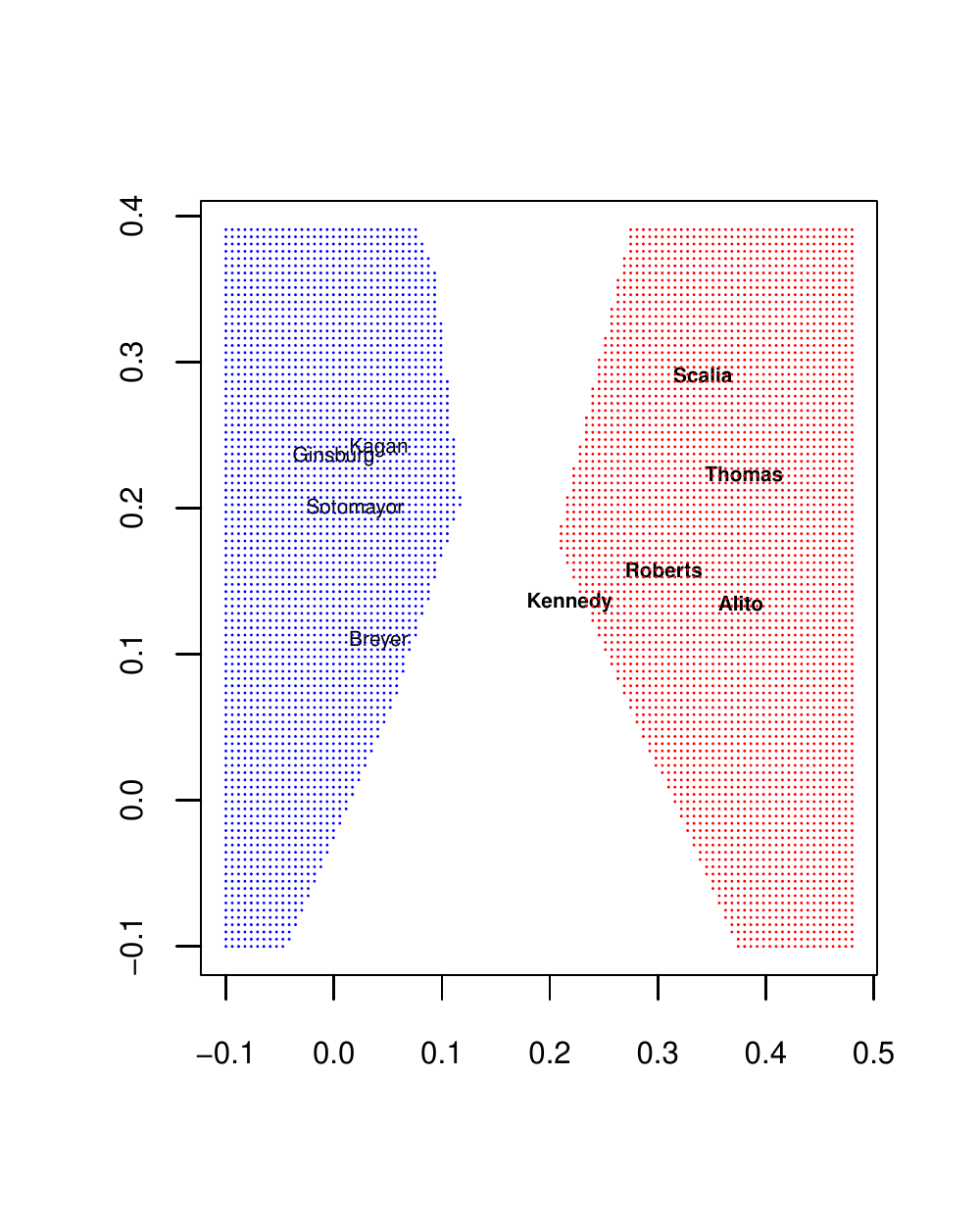}
\includegraphics[trim={0.81in 0.6in 0.36in 0.8in},clip,scale=1.08]{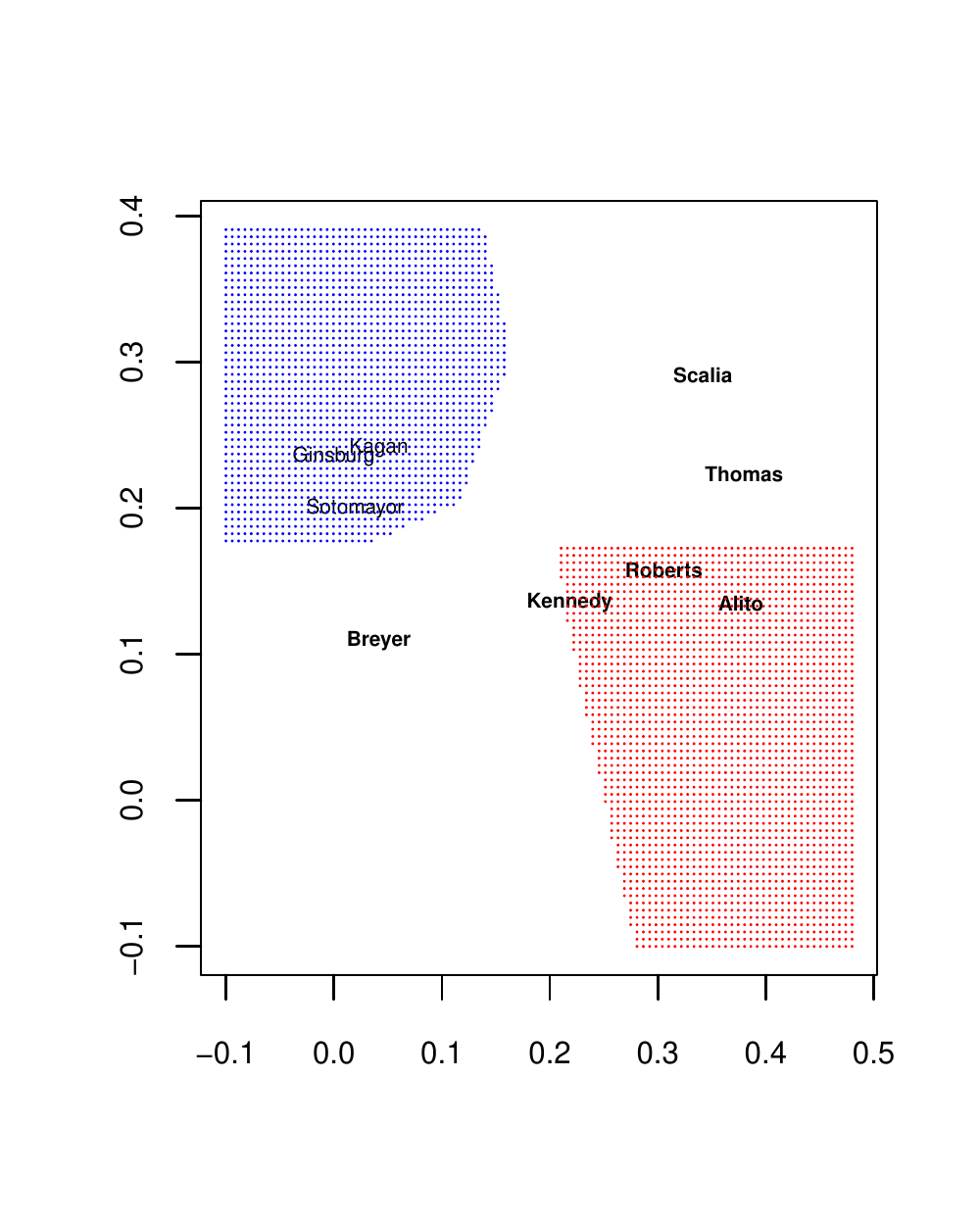}
\end{center}
\caption{Two cases from the 2009--2015 natural court, showing how the possible locations of the liberal outcome (blue dots) and conservative outcome (red dots) shift under the SOI model as Breyer flips from the liberal coalition to the conservative coalition.  These colored regions can be viewed as the \emph{ideological common ground} uniting the justices who voted together in these cases.}
\label{fig:SOIplots}
\end{figure}

\subsection{ The TIOLI model}

We can use essentially the same discrete approximation method described for the SOI model here for the TIOLI model.  Indeed, a subset $I\subseteq \{1,\ldots,N\}$ is a TIOLI allowed coalition if and only if there exists a grid point labeled by $I$, and such grid points are the possible locations for the vote outcome whose outcome sphere contains precisely the $I$ voters.  The location of the other vote outcome is unconstrained and irrelevant as it does not affect the vote under this model.  In Figure \ref{fig:TIOLIplots} we depict the possible locations of the relevant vote outcome for the bottom two cases shown earlier in Figure \ref{fig:fourcases}.

\begin{remark}
Another interesting connection to discrete/computational geometry is that the TIOLI allowed coalitions corresponding to $k$ voter ideal points that lie in an outcome sphere (as opposed to the complementary $N-k$ voters outside the outcome sphere) are in bijection with the cells of the $k$-th order Voronoi diagram defined by the $N$ voter ideal points.  Indeed, given a collection of $N$ points in the plane, the $k$-th order Voronoi diagram is a tessellation of the plane into polyhedral cells for which the set of the nearest $k$ of the $N$ points is constant \cite{HigherVoronoi}.  In particular, another way to find all TIOLI allowed coalitions is to compute the $k$-th order Voronoi diagram for all $1 \le k \le N-1$ and to take the sets of voters whose ideal points correspond to a nonempty cell of at least one of these Voronoi diagrams, as well as the complements to the subsets obtained this way.
\end{remark}

\begin{figure}
\begin{center}
\includegraphics[trim={0.49in 0.6in 0.36in 0.75in},clip,scale=1.08]{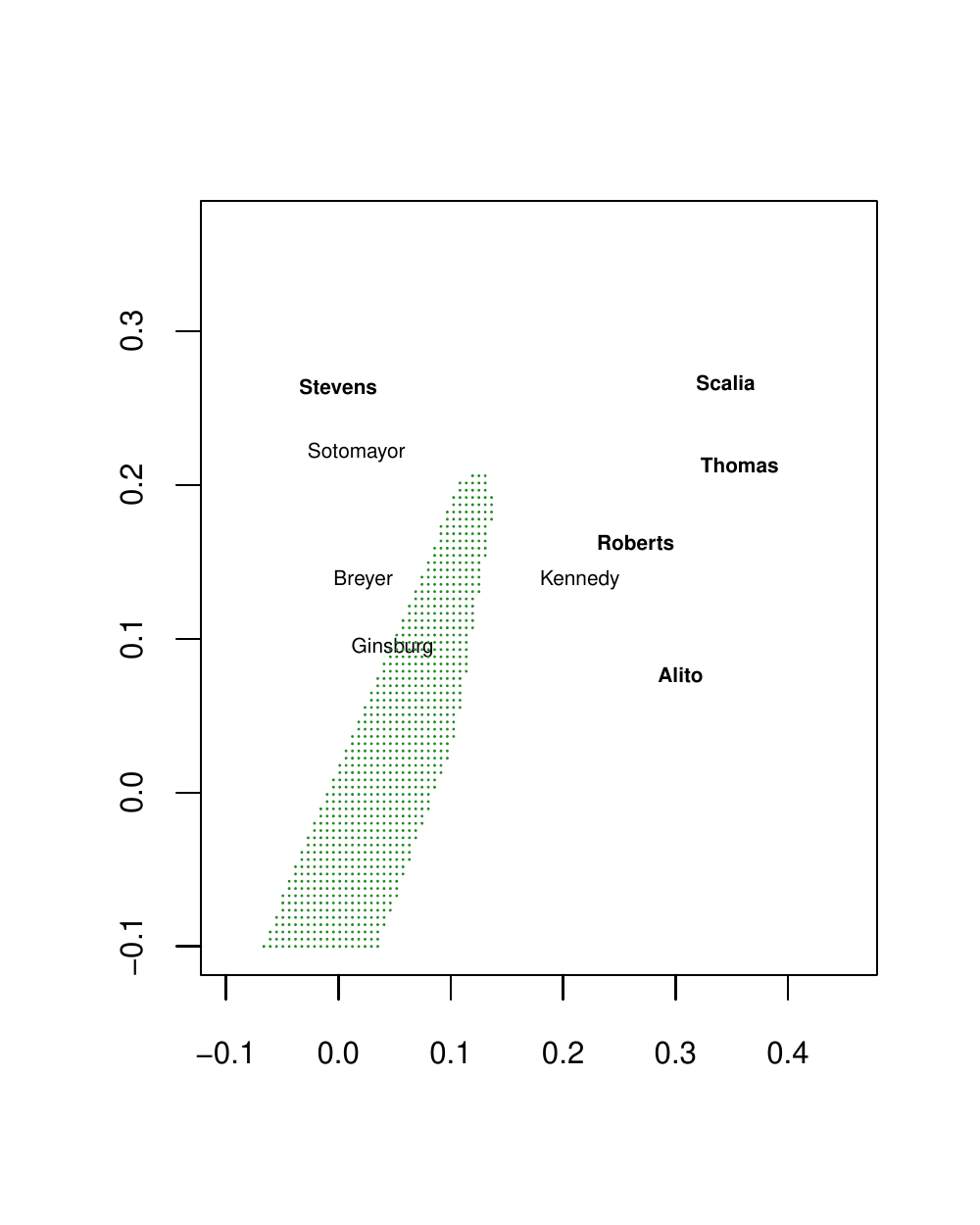}
\includegraphics[trim={0.81in 0.6in 0.36in 0.8in},clip,scale=1.08]{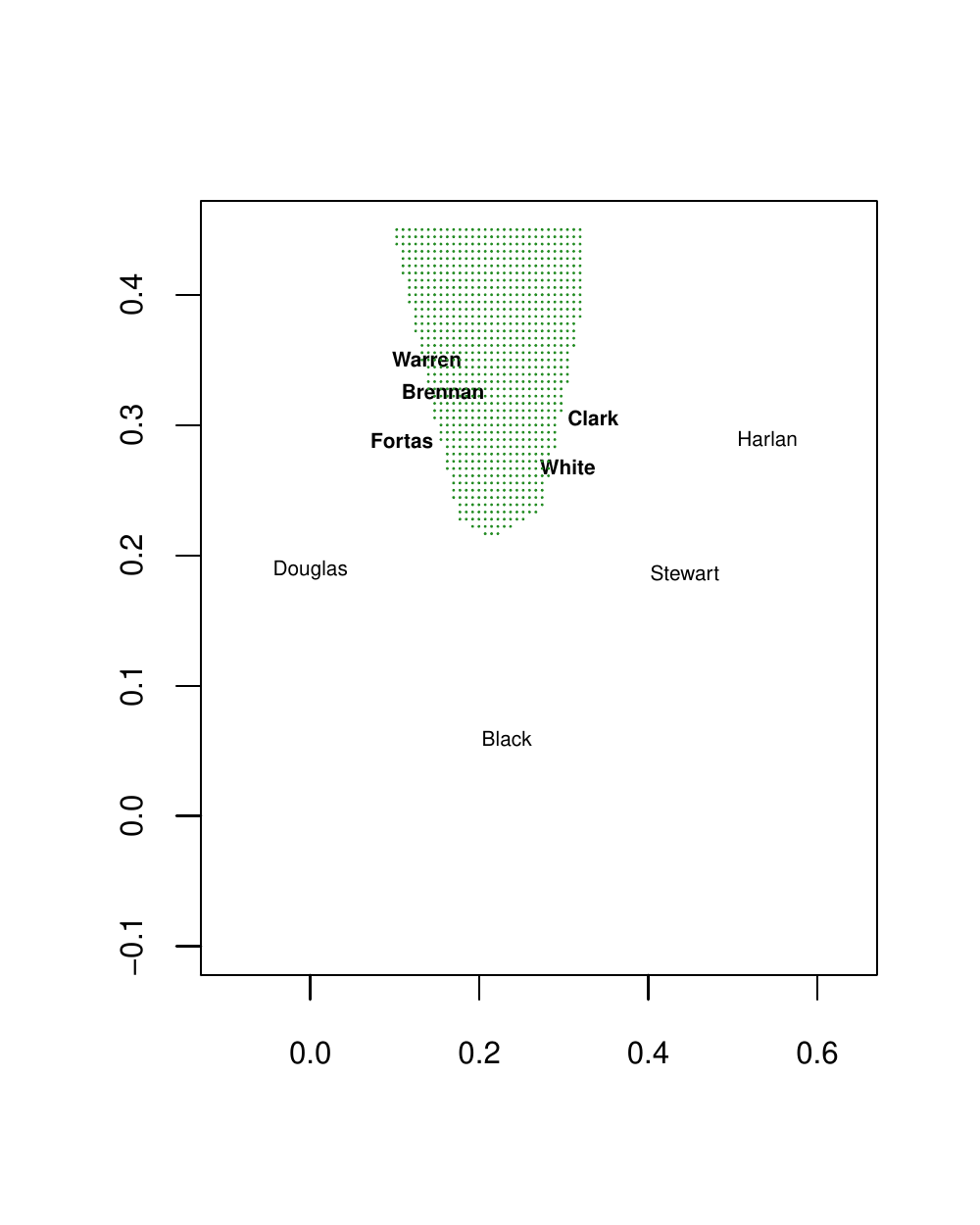}
\end{center}
\caption{The possible locations, indicated with green dots, of the relevant vote outcome provided by the TIOLI model for the 2010 and 1965 cases discussed earlier.  For the left plot these are the locations that can center an outcome sphere containing the four minority votes, and for the right plot they are the locations that can center an outcome sphere containing the five majority votes.  This suggests, for instance, that the minority disposition in \emph{New Process Steel v.\ NLRB} (2010) was aligned with both liberalism and pragmatism, and any justice who is either more conservative \emph{or} more legalistic was drawn to the opposite disposition.}
\label{fig:TIOLIplots}
\end{figure}


\section{Conclusion}\label{sec:conclusion}

The 5-to-4 vote in \emph{Ginzburg v.\ United States} (1965) mentioned in the introduction is quite perplexing from a one-dimension perspective (recall Figure \ref{fig:1mds}), and even when passing to the second dimension of the Supreme Court as in \cite{2ndDimension} the majority is not separated from the minority by a horizontal line.  But Figure \ref{fig:TIOLIplots} shows there is an ideological common ground in the plane uniting the majority justices.  Cases such as this one provide data-oriented evidence that the extra geometric flexibility provided by the TIOLI model introduced in this paper might bear further investigation from a legal perspective.  

It is worth clarifying, however, that the models studied in this paper do not offer a method for predicting Supreme Court outcomes (for that one could see \cite{SC-ML,SupCourtML}), nor do they attempt to explain why the justices voted the way they did in any particular case.  Instead, the main point in these models is to try to understand why certain majority/minority devisions that appear disordered and confounding from a traditional liberal-to-conservative perspective (or even from a pragmatic-to-legalistic perspective as in \cite{2ndDimension}) may perhaps be better viewed through the lens of the outcome spheres introduced in this paper and the consequent notion of ideological common ground revealed through the computational geometry methods discussed above.  That said, there are still plenty of cases resulting in majority/minority divisions that are as perplexing as ever even with all these spatial models---but the complexity of Supreme Court analysis is part of the charm of the subject, for both legal scholars and mathematicians.


\end{document}